\documentclass[cis]{ipart}

\RequirePackage[OT1]{fontenc}
\RequirePackage{amsthm,amsmath,amssymb}
\RequirePackage[numbers,square]{natbib} %
\RequirePackage[colorlinks]{hyperref}


\pubyear{0000}
\volume{0}
\issue{0}
\firstpage{1}
\lastpage{9}
\received{\smonth{6} \sday{18}, \syear{2014}}

\usepackage{amsfonts}

\usepackage{graphics}
\usepackage{epstopdf}

\usepackage{latexsym}
\usepackage{amsmath,amsthm}
\usepackage{dsfont}
\usepackage[dvips]{epsfig}
\usepackage{hyperref}
\usepackage{algorithm, algorithmic}
\usepackage{url}
\usepackage[all]{xy}
\usepackage{pstricks}

\def\mod{\rm{mod}}
\def\Bern{\rm{Bern}}
\newtheorem{Theorem}{Theorem}
\newtheorem{Lemma}{Lemma}

\def\bit{\bibitem}

\begin{document}

\begin{frontmatter}
\title{Generalization of Mrs. Gerber's Lemma\thanksref{T1}}
\thankstext{T1}{This work was partially funded by a grant from the University Grants Committee of the Hong Kong Special Administrative Region (Project No.\ AoE/E-02/08) and Key Laboratory of Network Coding,
Shenzhen, China (ZSDY20120619151314964).}

\begin{aug}
\author{\fnms{Fan} \snm{Cheng}\thanksref{t1}\ead[label=e1]{fcheng@inc.cuhk.edu.hk}}



\thankstext{t1}{Institute of Network Coding, The Chinese University of Hong Kong, N.T., Hong Kong. E-mail: fcheng@inc.cuhk.edu.hk}



\end{aug}

\begin{abstract}
Mrs. Gerber's Lemma (MGL) hinges on the convexity of $H(p*H^{-1}(u))$, where $H(u)$ is the binary entropy function. In this work, we prove that
$H(p*f(u))$ is convex in $u$ for every $p\in [0,1]$ provided  $H(f(u))$ is convex in $u$,
 where $f(u) : (a, b) \to [0, \frac12]$. Moreover, our result subsumes MGL and simplifies the original proof. We show that the
generalized MGL can be applied in binary broadcast channel to simplify some  discussion.
\end{abstract}


\begin{keyword}
\kwd{Mrs. Gerber's Lemma}
\kwd{Binary  Channel}
\end{keyword}
\end{frontmatter}
\maketitle

\section{Introduction}
Mrs. Gerber's Lemma (MGL) was introduced by Wyner and Ziv \cite{WynerZivGerber73}  in 1973, which was
shown  to be a binary version of the Entropy Power Inequality (EPI) by Shamai and Wyner \cite{ShamaiWynerEPI90}.
In Witsenhausen \cite{Wit74}, MGL was generalized to arbitrary binary input-output channels. In Ahlswede and Korner \cite{AK-MGL}, they introduced the concept of the \textit{gerbator} for arbitrary discrete memoryless channel to study MGL in alphabets with higher cardinality. In  Chayat and Shamai \cite{ChayatShamai-MGL}, MGL was extended to arbitrary memoryless
symmetric channels  with binary inputs and discrete or continuous outputs.  In  Jog and Anantharam \cite{JogVen2012}, they conjectured a strengthed MGL on an arbitrary abelian group and partially
proved it.  MGL is an instrumental tool to tackle the problems related to binary channels; e.g., the capacity
region of binary symmetric broadcast channel (BS-BC) in Wyner~\cite{WynerZivGerber73A}; the capacity region of BSC-BEC broadcast channel in Nair~\cite{Nair2010}.

The rest of this work is organized as follows. In Section \ref{sec:mgl}, we introduce the necessary notation and the background.
In Section \ref{sec:gmgl}, we present our main result on the generalized MGL. In Section \ref{sec:app}, we demonstrate the power of our result by simplifying the discussion in the binary broadcast channel.

\section{Mrs. Gerber's Lemma}\label{sec:mgl}
For  $x\in [0,1]$, the binary entropy function is defined as
\begin{equation}\notag
H(x) := -x\log x - (1-x)\log (1-x)\footnote{All logarithms in this work are natural.}
\end{equation}
and the inverse of $H(x)$ is defined as
\begin{equation}\notag
H^{-1}(x)\in [0,\frac{1}{2}].
\end{equation}
Then
\begin{equation} \notag
\frac{dH}{dx} = \log \frac{1-x}{x} \text{ and }   \frac{d^2H}{dx^2} = -\frac{1}{x(1-x)}.
\end{equation}
The convolution of $p$ and $x$ is denoted by
\begin{equation}\notag
p*x := p(1-x)+(1-p)x,
\end{equation}
where $p\in [0, 1]$.
\begin{Theorem}[Mrs. Gerber's Lemma]
Let $X$ be a Bernoulli random variable and let $U$ be an arbitrary random variable. If $Z\sim$ \Bern($p$) is independent of $(X,U)$ and $Y=X+Z$ $(\mod\ 2)$, then
\begin{equation}\notag
H^{-1}(H(Y|U))\geq H^{-1}(H(X|U))*p.
\end{equation}
\end{Theorem}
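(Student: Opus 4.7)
The plan is to reduce the inequality, via conditioning on $U$, to Jensen's inequality for a specific one-variable function, and then identify the convexity of that function as the analytic heart of the matter.

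First I would condition on $U=u$. Since $X$ is Bernoulli, $X\mid U=u$ is Bernoulli with some parameter that, using the symmetry $H(q)=H(1-q)$, we may arrange to lie in $[0,\tfrac12]$; call it $q_u$. By the same token we may assume $p\in[0,\tfrac12]$ (replace $Z$ by $1-Z$ otherwise). Because $Z\sim\Bern(p)$ is independent of $(X,U)$, the conditional law of $Y=X+Z\ (\mod\ 2)$ given $U=u$ is $\Bern(p*q_u)$, with $p*q_u\in[0,\tfrac12]$ since both factors are, and averaging over $u$ gives
\begin{equation*}
H(X\mid U)=\mathbb{E}_U[H(q_U)],\qquad H(Y\mid U)=\mathbb{E}_U[H(p*q_U)].
\end{equation*}

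Since $H$ is increasing on $[0,\tfrac12]$ and both $H^{-1}(H(Y\mid U))$ and $p*H^{-1}(H(X\mid U))$ lie in $[0,\tfrac12]$, applying $H$ to both sides shows the target inequality is equivalent to
\begin{equation*}
\mathbb{E}_U[H(p*q_U)]\;\ge\;H\bigl(p*H^{-1}(\mathbb{E}_U[H(q_U)])\bigr).
\end{equation*}
Setting $v_u:=H(q_u)\in[0,\log 2]$ and $f(v):=H\bigl(p*H^{-1}(v)\bigr)$, the right-hand side equals $f(\mathbb{E}_U[v_U])$ and the left-hand side equals $\mathbb{E}_U[f(v_U)]$, so the inequality is exactly Jensen's inequality for $f$. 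Thus the theorem reduces to showing that $f$ is convex on $[0,\log 2]$ for every $p\in[0,\tfrac12]$.

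The main obstacle is this convexity claim, which is the analytic core of MGL. I would verify it by a direct second-derivative computation: using $H'(x)=\log\frac{1-x}{x}$ and $H''(x)=-1/(x(1-x))$ from Section~\ref{sec:mgl}, the chain rule applied to $q=H^{-1}(v)$ (so $dq/dv=1/H'(q)$) and then to $q\mapsto p*q$ yields an explicit formula for $f''(v)$ whose non-negativity on $(0,\log 2)$ must be checked by algebraic manipulation; a short limiting argument handles the endpoints. I note in passing that in the present paper this convexity is precisely the instance of the generalized MGL announced in the abstract obtained by taking the inner map to be $H^{-1}$ (so that $H\circ H^{-1}=\mathrm{id}$ is trivially convex), so once the generalized convexity result is established the classical MGL follows at once from the reduction above.
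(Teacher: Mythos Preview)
Your proposal is correct and aligns with the paper's treatment: the paper states that MGL is equivalent to the convexity of $H(p*H^{-1}(u))$ (your Jensen reduction makes this equivalence explicit), and then obtains that convexity as the special case $f=H^{-1}$ of Theorem~\ref{G-MGL}, exactly as you note at the end. The only minor difference is that instead of a brute-force second-derivative check, the paper establishes the convexity by showing the relevant expression is concave in $p$ and nonnegative at the endpoints $p=0$ and $p=\tfrac12$.
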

MGL can be equivalently proved via the following convexity lemma about the binary entropy function.
\begin{Lemma}\notag
    $H(p*H^{-1}(u))$ is convex in $u \in [0,1]$ for every $p\in [0, 1]$.
\end{Lemma}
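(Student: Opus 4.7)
The plan is to verify convexity by computing $g''(u)$ directly, where $g(u) := H(p * H^{-1}(u))$. A first reduction: the identity $1 - p*v = (1-p)*v$ combined with the symmetry $H(y) = H(1-y)$ gives $H(p*v) = H((1-p)*v)$, so without loss we may assume $p \in [0, \tfrac12]$; the boundary cases $p = 0$ and $p = \tfrac12$ are trivial since $g$ is then the identity on $[0,1]$ and the constant $\log 2$, respectively. Next, I parametrize by $v := H^{-1}(u) \in [0, \tfrac12]$, so $u = H(v)$ and $\frac{dv}{du} = 1/H'(v)$. Applying the chain rule twice to $g(u) = H(p*v)$ and using $\frac{d(p*v)}{dv} = 1 - 2p$, one finds
\[
g''(u) \;=\; \frac{1-2p}{H'(v)^3}\Bigl[(1-2p)\, H''(p*v)\, H'(v) \;-\; H''(v)\, H'(p*v)\Bigr].
\]

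After substituting $H'(x) = \log\frac{1-x}{x}$ and $H''(x) = -\frac{1}{x(1-x)}$ and clearing the positive factor $H'(v)^3 \cdot v(1-v)(p*v)(1-p*v)$, the condition $g''(u) \geq 0$ reduces to the one-variable inequality
\[
\psi(p*v) \;\geq\; (1-2p)\,\psi(v), \qquad \psi(x) \;:=\; x(1-x)\log\tfrac{1-x}{x},
\]
for $v \in (0, \tfrac12)$ and $p \in (0, \tfrac12)$. Thus the whole lemma collapses to a clean inequality about the single auxiliary function $\psi$.

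To establish this inequality, I would study $F(v) := \psi(p*v) - (1-2p)\psi(v)$. Direct differentiation yields $\psi'(x) = (1-2x)\log\frac{1-x}{x} - 1$ and $\psi''(x) = -2\log\frac{1-x}{x} - \frac{1-2x}{x(1-x)}$; on $(0, \tfrac12)$ both summands are negative, so $\psi'' < 0$ there and $\psi'$ is strictly decreasing. Since $p*v \geq v$ on this range, $\psi'(p*v) \leq \psi'(v)$, so
\[
F'(v) \;=\; (1-2p)\bigl[\psi'(p*v) - \psi'(v)\bigr] \;\leq\; 0,
\]
i.e., $F$ is nonincreasing. Evaluating the endpoints, $F(0) = \psi(p) \geq 0$ (because $\psi \geq 0$ on $[0, \tfrac12]$) and $F(\tfrac12) = \bigl(1 - (1-2p)\bigr)\psi(\tfrac12) = 0$. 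Monotonicity together with $F(\tfrac12) = 0$ forces $F \geq 0$ throughout $[0, \tfrac12]$, which gives $g'' \geq 0$ and completes the proof on the interior; continuity of $g$ at the endpoints extends convexity to all of $[0,1]$.

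The main obstacle is the sign bookkeeping in the second-derivative computation: the factor $(1-2p)$ flips sign at $p = \tfrac12$ (handled by the initial symmetry reduction) and $H'(v)$ vanishes at $v = \tfrac12$. Once those pitfalls are avoided and $\psi''<0$ on $(0,\tfrac12)$ is verified by inspecting the two summands, the reduction to $F \geq 0$ and the ensuing monotonicity argument are entirely elementary.
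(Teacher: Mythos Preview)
Your argument is correct. Both you and the paper reduce the second-derivative condition to the same scalar inequality
\[
\psi(p*v) \;\geq\; (1-2p)\,\psi(v),\qquad \psi(x)=x(1-x)\log\tfrac{1-x}{x},
\]
and both exploit the concavity of $\psi$ on $(0,\tfrac12)$, verified from $\psi''(x)=-2\log\tfrac{1-x}{x}-\tfrac{1-2x}{x(1-x)}<0$. The difference is in which variable is frozen when finishing the inequality. The paper fixes $v$ (equivalently $u$) and observes that the difference $\psi(p*v)-(1-2p)\psi(v)$ is concave in $p$ (concave composed with affine, plus linear), then checks the two endpoints $p=0$ and $p=\tfrac12$. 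You instead fix $p$ and show the same difference is nonincreasing in $v$ via $F'(v)=(1-2p)[\psi'(p*v)-\psi'(v)]\leq 0$, then check the single endpoint $v=\tfrac12$. Your slicing is slightly leaner for this particular lemma; the paper's slicing in $p$ is what lets the argument extend verbatim to any $f$ with $H(f(u))$ convex (their Theorem~\ref{G-MGL}), since the $p$-concavity step does not use the specific form of $f$, whereas your monotonicity in $v$ relies on $f=H^{-1}$ through the clean reduction to $\psi$ alone.
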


\section{Generalization of MGL}\label{sec:gmgl}

We prove the following generalization of Mrs. Gerber's Lemma.
\begin{Theorem}\label{G-MGL}
Let $f(u): (a,b) \to  [0,\ \frac12]$ be twice differentiable. Then for every $p \in [0, 1]$, the function $H(p*f(u))$ is convex in $u$
provided $H(f(u))$ is convex in $u$.
\end{Theorem}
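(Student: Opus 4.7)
The plan is to recognize $H(p*f(u))$ as the composition of the given convex function $F(u):=H(f(u))$ with the function appearing in the classical MGL, and then invoke the standard composition rule for convex functions.

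First I would rewrite the target. Since $f(u)\in[0,\tfrac12]$, we have $H^{-1}(H(f(u)))=f(u)$, so
\[
H(p*f(u)) \;=\; H\bigl(p*H^{-1}(F(u))\bigr) \;=\; h(F(u)),
\]
where $h(t):=H(p*H^{-1}(t))$ is defined on the image $[0,\log 2]$ of $H$ restricted to $[0,\tfrac12]$. By hypothesis $F$ is convex in $u$, and by the Lemma just above (the convexity form of MGL), $h$ is convex in $t$.

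Next I would verify that $h$ is non-decreasing. Using the identity $(1-p)*x=1-(p*x)$ together with the symmetry $H(y)=H(1-y)$, one sees that $h$ is unchanged upon replacing $p$ by $1-p$, so it suffices to treat $p\in[0,\tfrac12]$. For such $p$, the map $x\mapsto p*x=p+(1-2p)x$ is non-decreasing on $[0,\tfrac12]$ with image in $[p,\tfrac12]$. Hence $h$ is a composition of three non-decreasing maps, namely $H^{-1}$ on $[0,\log 2]$, $x\mapsto p*x$ on $[0,\tfrac12]$, and $H$ on $[0,\tfrac12]$, and so is itself non-decreasing.

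Finally, the composition rule---a convex non-decreasing function composed with a convex function is convex---applied to $h\circ F$ yields convexity of $H(p*f(u))$ in $u$, as desired. The main obstacle is purely conceptual: spotting the re-expression that reduces the generalized statement to an application of the original MGL. Once that is noticed, the remaining monotonicity check and invocation of the composition rule are routine, and no second-derivative computation is required.
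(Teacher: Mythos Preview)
Your argument is correct, and it takes a genuinely different route from the paper's. The paper computes the second derivative of $H(p*f(u))$ directly, reduces to showing that a certain function $g(p)$ is non-negative on $[0,\tfrac12]$, and then proves this by checking the endpoints $p=0$, $p=\tfrac12$ together with concavity of $g$ in $p$. Your approach instead factors $H(p*f(u))=h(F(u))$ with $h(t)=H(p*H^{-1}(t))$ and $F=H\circ f$, invokes the classical MGL Lemma for the convexity of $h$, checks that $h$ is non-decreasing, and applies the standard ``convex non-decreasing composed with convex is convex'' rule.

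What each approach buys: the paper's proof is self-contained and in fact yields MGL as a byproduct (take $f=H^{-1}$), which is one of its advertised points---``our argument simplifies the proof of MGL.'' Your proof is shorter and avoids any second-derivative calculus, but it \emph{assumes} MGL, so it cannot serve that purpose; from your viewpoint Theorem~\ref{G-MGL} and the Lemma are trivially equivalent rather than one being a strengthening of the other. A nice side effect of your argument is that it never uses twice differentiability of $f$, so it actually establishes the stronger Theorem~\ref{G-MGL-2} directly, without the separate one-sided-derivative analysis the paper carries out there.
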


\begin{proof}
The function $H(p*f(u))$ is symmetric in $p$ about $\frac{1}{2}$, hence we can assume that $p\in [0,\frac12]$.  Since $f(u) \leq \frac12$, $p*f(u)=(1-2p)f(u) + p\leq \frac12$.

The second derivative of the given expression with respect to $u$ is given~by
\begin{equation}\label{eqn:derivative}
-\frac{((1-2p)f'(u))^2}{(1-p*f(u))(p*f(u))} + (1-2p)f''(u)\log\frac{1-p*f(u)}{p*f(u)}.
\end{equation}
 The convexity of $H(f(u))$ ($p =0$ in \eqref{eqn:derivative}) implies that $f''(u)\geq 0$.

To show the convexity it suffices to show that
\begin{equation}\notag
g(p):= -(1-2p)f'(u)^2  + (1-p*f(u))(p*f(u))f''(u) \log \frac{1-p*f(u)}{p*f(u)} \geq 0.
\end{equation}
Further we know that at both $p = 0$ and $p = \frac12$ the above expression is non-negative (at $p = 0$ from
assumption).

We will show that $g(p)$ is concave in $p$ when $p\in [0, \frac12]$. Note that the function $g_1(x) = x(1-x)\log \frac{1-x}{x}$ satisfies
$$g_1'(x) = (1-2x)\log\frac{1-x}{x} - 1,\text{ and }g_1''(x) = -2\log\frac{1-x}{x} -\frac{1-2x}{x(1-x)}.$$
Thus $g_1(x)$ is concave when $x\in [0,\frac12]$, implying $g(p)$ is concave in $p$ as desired.

\end{proof}
 When $p = 0$, $H(p*f(u))=H(f(u))$. Theorem \ref{G-MGL} shows that the convexity of $H(p*f(u))$ directly follows its convexity at the endpoint $p = 0$.
MGL follows  from Theorem \ref{G-MGL} obviously, because $H(H^{-1}(u))=u$.  Also, our argument simplifies the proof of MGL in \cite{WynerZivGerber73}.

Note that
\begin{align}
&    \frac{df^{-1}}{du} = \frac{1}{f'(f^{-1}(u))}, \text{ and }   \frac{d^2f^{-1}}{du^2} =    -\frac{f''(f^{-1}(u))}{[f'(f^{-1}(u))]^3}.\notag
\end{align}
When $f(u)$ is replaced by $f^{-1}(u)$ in Theorem~\ref{G-MGL}, $H(f^{-1}(u))$ is convex in $u$ if and only if
\begin{equation}\label{eqn:inver-derivative}
     -f^{-1}(u) (1-f^{-1}(u))\frac{f''(f^{-1}(u))}{f'(f^{-1}(u))}\log \frac{1-f^{-1}(u)}{f^{-1}(u)}     \geq 1.
\end{equation}

Theorem \ref{G-MGL} relies on the twice differentiability of $f(u)$. In the next theorem, we prove a strengthened version without this constraint.
\begin{Theorem}\label{G-MGL-2}
For every $p \in [0, 1]$, the function $H(p*f(u))$ is convex in $u$
provided $H(f(u))$ is convex in $u$, where $f(u): (a,b) \to  [0,\ \frac12].$
\end{Theorem}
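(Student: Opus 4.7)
The plan is to reduce Theorem~\ref{G-MGL-2} to Theorem~\ref{G-MGL} by smoothly approximating $f$ in a way that preserves convexity of $H\circ f$, and then passing to a pointwise limit of convex functions. Observe first that the hypothesis already forces regularity: $\phi(u):=H(f(u))$ is convex on the open interval $(a,b)$ and hence continuous there, with range contained in $[0,\log 2]$, so $f=H^{-1}\circ\phi$ is continuous as well. It therefore suffices to prove convexity of $H(p*f(u))$ on an arbitrary compact subinterval $[c,d]\subset(a,b)$, since such subintervals exhaust $(a,b)$.

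Next I would build smooth convex approximations on $[c,d]$. Take a standard nonnegative $C^\infty$ mollifier $\rho_n$ supported in $[-1/n,1/n]$ with $\int\rho_n=1$, and set
\begin{equation*}
\phi_n(u):=\bigl(1-\tfrac{1}{n}\bigr)(\phi*\rho_n)(u)+\tfrac{\log 2}{2n}
\end{equation*}
for $n$ large enough that the convolution is defined throughout $[c,d]$. Convolution against a nonnegative kernel preserves convexity, as do positive scaling and the addition of a constant, so $\phi_n$ is smooth and convex; the shrink-and-shift ensures $\phi_n$ takes values in a closed subinterval of $(0,\log 2)$. Define $f_n:=H^{-1}\circ\phi_n$; since $H^{-1}$ is $C^\infty$ on compact subsets of $(0,\log 2)$, $f_n$ is twice differentiable with image in a compact subset of $(0,\tfrac12)$, and $H(f_n)=\phi_n$ is convex. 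Theorem~\ref{G-MGL} applied to $f_n$ then yields convexity of $H(p*f_n(u))$ on $(c,d)$ for every $p\in[0,1]$.

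To close the argument I would pass to a pointwise limit. Standard mollification theory gives $\phi*\rho_n\to\phi$ uniformly on $[c,d]$, hence $\phi_n\to\phi$ uniformly, so $f_n\to f$ and $H(p*f_n)\to H(p*f)$ pointwise on $[c,d]$ by continuity of $H^{-1}$ and $H$. A pointwise limit of convex functions is convex, so $H(p*f(u))$ is convex on $[c,d]$, and since $[c,d]$ was arbitrary, on all of $(a,b)$.

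The main technical obstacle is that $H^{-1}$ has infinite derivative at the endpoints $0$ and $\log 2$ of its domain: a bare mollification $\phi*\rho_n$ would leave $f_n$ non-differentiable at any point where $\phi_n\in\{0,\log 2\}$, obstructing the direct application of Theorem~\ref{G-MGL}. The explicit shrink-and-shift in the definition of $\phi_n$ is the device that sidesteps this, keeping $\phi_n$ uniformly inside a compact subinterval of $(0,\log 2)$ at a cost that vanishes as $n\to\infty$.
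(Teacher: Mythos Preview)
Your approximation argument is correct and complete: the shrink-and-shift keeps $\phi_n$ strictly inside $(0,\log 2)$ so that $f_n=H^{-1}\circ\phi_n$ is genuinely $C^\infty$, Theorem~\ref{G-MGL} applies, and the pointwise limit of the convex functions $H(p*f_n)$ inherits convexity. This is, however, a genuinely different route from the paper's. The paper does not approximate at all; instead it observes that the convexity of $H(f(u))$ forces $f$ itself to be convex, so the one-sided derivatives $f'_+$ exist and are increasing, and then it reruns the computation of Theorem~\ref{G-MGL} with $f'_+$ in place of $f'$: one checks directly that $s'_+(u)=(1-2p)f'_+(u)\log\frac{1-p*f(u)}{p*f(u)}$ is increasing by showing the relevant difference is (locally) nonnegative via the same concavity-in-$p$ trick, and then appeals to the fact that an increasing right-hand derivative implies convexity. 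Your approach is cleaner in that it shows the non-smooth case is a formal consequence of the smooth one, and it avoids the somewhat informal first-order expansion the paper uses when comparing $s'_+(u_1)$ with $s'_+(u_0)$; on the other hand, the paper's argument is entirely elementary and self-contained, needing no mollification machinery or limit passage.
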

Though $f(u)$ is not twice differentiable, $f(u)$ is still convex by the convexity of $H(f(u))$.  Since $f''(u)$ may not exist, we need an alternative method to deal with the convexity. Next, we state some instrumental results on convex function in Pollard \cite{Pollard-MTP} (Appendix C).

A convex function is always continuous and its one-sided derivatives  always exist. For a convex function $f(x)$, denote its left-hand and right-hand derivatives by $f'_-(x)$ and $f'_+(x)$, respectively.
Furthermore, both $f'_-(x)$ and $f'_+(x)$ are increasing; i.e., 
\begin{equation}
f'_-(x_0) \leq f'_-(x_1) \text{ and } f'_+(x_0) \leq f'_+(x_1) \text{ for each } x_0 < x_1.
\end{equation}
 Conversely, when $f'_+(x)$  is increasing, $f(x)$ is convex.
\begin{Lemma}\label{convexity}
If a real-valued function $f$ has an increasing, real-valued right-hand derivative
at each point of an open interval, then $f$ is convex on that interval. 
\end{Lemma}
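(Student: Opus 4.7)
The plan is to prove Lemma~\ref{convexity} by a contradiction argument built on the chord characterization of convexity. I first note that finiteness of $f'_+(x)$ at each point of $(a,b)$ already forces $f$ to be right-continuous there, and since this lemma is invoked in the sequel for functions that are continuous, I will treat $f$ as continuous on $(a,b)$.

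Suppose for a contradiction that $f$ is not convex on $(a,b)$. Then there exist points $x_1 < x_2 < x_3$ in $(a,b)$ with $f(x_2)$ lying strictly above the chord joining $(x_1,f(x_1))$ and $(x_3,f(x_3))$. Let $L$ denote that affine chord and set $g := f - L$. Then $g$ is continuous on $[x_1,x_3]$, vanishes at $x_1$ and $x_3$, satisfies $g(x_2) > 0$, and has $g'_+ = f'_+ - L'$ still increasing on $(a,b)$. Consequently $g$ attains its maximum at some interior point $c \in (x_1,x_3)$. Comparing $g(c+h)$ to $g(c)$ for small $h>0$ gives $g'_+(c) \leq 0$, and monotonicity of $g'_+$ then propagates this to $g'_+(x) \leq 0$ for every $x \in [x_1,c]$.

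The remaining step is a one-sided mean-value-type claim: a continuous function whose right-derivative is nowhere positive on $[x_1,c]$ must be non-increasing there. Once this is granted, we conclude $0 = g(x_1) \geq g(c) > 0$, the desired contradiction. I expect this one-sided monotonicity implication to be the main obstacle, because the usual mean value theorem does not give control of $f'_+$. The standard route is a barrier argument: fix $\epsilon > 0$ and let $S_\epsilon := \{x \in [x_1,c] : g(x) \leq g(x_1) + \epsilon(x-x_1)\}$. Continuity of $g$ makes $S_\epsilon$ closed, it contains $x_1$, and if its supremum $s$ were strictly less than $c$ then $g'_+(s) \leq 0 < \epsilon$ would force $s+h \in S_\epsilon$ for all sufficiently small $h>0$, contradicting the definition of $s$. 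Hence $c \in S_\epsilon$, giving $g(c) \leq \epsilon(c-x_1)$; letting $\epsilon \to 0^+$ yields $g(c) \leq 0$ and the contradiction is complete.
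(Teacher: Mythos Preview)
The paper does not supply its own proof of Lemma~\ref{convexity}; it is quoted as a known result from Pollard~\cite{Pollard-MTP} (Appendix~C) and then used as a black box in the proof of Theorem~\ref{G-MGL-2}. There is therefore no in-paper argument to compare your proposal against.

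Your argument itself is sound. The chord-subtraction reduction to $g=f-L$, the observation $g'_+(c)\le 0$ at an interior maximizer, the propagation $g'_+(x)\le 0$ on $[x_1,c]$ via monotonicity of $g'_+$, and the $\epsilon$-barrier set $S_\epsilon$ showing that a continuous function with nonpositive right-derivative is nonincreasing, are all correct and together give a clean self-contained proof. One remark worth making explicit: the lemma as literally stated, without continuity, is actually false --- take $f(x)=0$ for $x<0$ and $f(x)=1$ for $x\ge 0$ on $(-1,1)$, which has $f'_+\equiv 0$ (hence increasing) yet fails convexity at the midpoint of $-\tfrac12$ and $\tfrac12$. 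So your decision to assume continuity is not merely a convenience but a genuine repair of the hypothesis; it is harmless for the paper's purposes because the lemma is only applied to the manifestly continuous function $s(u)=H(p*f(u))$.
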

Now, we prove Theorem~\ref{G-MGL-2}.
\begin{proof}
As in Theorem \ref{G-MGL}, we can still assume $p\leq \frac12$. Hence $p*f(u)\leq\frac12$. 

Since $f(u)$ is convex in $u$, for each $u_0<u_1$,
\begin{equation}\notag
f'_+(u_0)\leq f'_+(u_1).
\end{equation}
Let $$s(u) := H(p*f(u)).$$
Then $s(u)$ is continuous in $u$.
Since $H(x)$ is  differentiable,
\begin{equation}\notag
s'_+(u) = (1-2p)f'_+(u)\log\frac{1-p*f(u)}{p*f(u)}.
\end{equation}
To show $s'_+(u)$ is increasing in an interval, it is equivalent to show that $s'_+(u)$ is increasing locally; i.e.,
\begin{equation} \label{obj:1}
s'_+(u_0) \leq s'_+(u_1),
\end{equation}
where $u_1>u_0$ and $u_1\to u_0$.

Since the right-hand derivative of $\log\frac{1-p*f(u)}{p*f(u)}$ exists, 
\begin{align}
& s'_+(u_1) = (1-2p)f'_+(u_1)\log\frac{1-p*f(u_1)}{p*f(u_1)} \notag\\
& = (1-2p)f'_+(u_1)\left(\log\frac{1-p*f(u_0)}{p*f(u_0)} -\frac{(1-2p)f'_+(u_0)(u_1-u_0)}{(p*f(u_0))(1-p*f(u_0))}\right).   \notag
\end{align}
To show \eqref{obj:1}, it is equivalent to show
\begin{equation}\notag
(f'_+(u_1)-f'_+(u_0))\log\frac{1-p*f(u_0)}{p*f(u_0)}-\frac{(1-2p)f'_+(u_0)f'_+(u_1)(u_1-u_0)}{(p*f(u_0))(1-p*f(u_0))}\geq 0.
\end{equation}
That is 
\begin{align}
g_2(p) :=& (f'_+(u_1)-f'_+(u_0))(p*f(u_0))(1-p*f(u_0))\log\frac{1-p*f(u_0)}{p*f(u_0)}\notag \\
         &-(1-2p)f'_+(u_0)f'_+(u_1)(u_1-u_0) \geq 0.\notag
\end{align}
Since $f'_+(u_1)\geq f'_+(u_0)$ and $p*f(u_0)\leq \frac12$, $g_2(p)$ is also concave in $p$, similar to $g(p)$. Thus, the convexity of $H(p*f(u))$ follows from the convexity at the endpoints $p=0$ and $p=\frac12$, which completes the proof.
\end{proof}
It is easy to see that Theorem \ref{G-MGL} and Theorem \ref{G-MGL-2} still hold when $p$ $\in$ $[p_0, 1-p_0]$, as long as $H(p_0*f(u))$ is convex in $u$.

\section{Application}\label{sec:app}

As another example, we give a simple proof to the following result.

\begin{Theorem}[Claim 1 in \cite{NairWang2009}]
    When $f = H(\frac{u}{2}) + H(\frac{1-u}{2})$, $H(p*f^{-1}(u))$ is
    convex in $u\in [f(0.06), f(0.5)]$ for every $p~\in [0, \frac{1}{2}]$.
\end{Theorem}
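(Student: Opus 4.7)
The natural strategy is to apply Theorem~\ref{G-MGL} with its ``outer function'' taken to be $f^{-1}$. Since $f'(v) = \frac{1}{2}\log\frac{(2-v)(1-v)}{v(1+v)}$ is strictly positive for $v \in (0, 1/2)$, the function $f$ is a smooth, strictly increasing bijection from $[0, 1/2]$ onto $[\log 2,\, 2H(1/4)]$, and $f^{-1}$ is twice differentiable on the relative interior and maps $[f(0.06), f(0.5)]$ onto $[0.06, 0.5] \subset [0, 1/2]$. Hence Theorem~\ref{G-MGL} applies with $f^{-1}$ in place of $f$, and it reduces the desired convexity of $H(p * f^{-1}(u))$ for every $p \in [0, 1/2]$ to the single requirement that the $p = 0$ case, $H(f^{-1}(u))$, is itself convex on $[f(0.06), f(0.5)]$.

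To verify this $p = 0$ convexity I would invoke the criterion~\eqref{eqn:inver-derivative} already derived in the paper: convexity of $H(f^{-1}(u))$ at $u = f(v)$ is equivalent to
\[
-v(1-v)\,\frac{f''(v)}{f'(v)}\,\log\frac{1-v}{v} \;\geq\; 1.
\]
Substituting
\[
f'(v) = \frac{1}{2}\log\frac{(2-v)(1-v)}{v(1+v)}, \quad f''(v) = -\frac{1}{2}\left[\frac{1}{2-v} + \frac{1}{1-v} + \frac{1}{v} + \frac{1}{1+v}\right],
\]
obtained by elementary differentiation of $H(v/2) + H((1-v)/2)$, converts the problem into an explicit single-variable transcendental inequality whose validity has to be established on $v \in [0.06, 0.5]$.

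The main obstacle is precisely this verification. A useful sanity check is the upper endpoint $v = 1/2$: both $f'(v)$ and $\log\frac{1-v}{v}$ vanish simultaneously, and a single application of L'Hopital's rule shows that the LHS of the inequality tends to $1$, so the inequality is tight at $v = 1/2$. My plan is to clear the denominator $f'(v)$ and rearrange to a form in which both sides are positive and real-analytic on $[0.06, 1/2]$, and then establish monotonicity in $v$ (most likely by differentiating once and showing the derivative has constant sign) so that the worst case is isolated to one of the endpoints. The value $0.06$ is evidently the numerical threshold below which the inequality ceases to hold; verifying it at $v = 0.06$ together with monotonicity on $[0.06, 1/2]$ completes the argument, and Theorem~\ref{G-MGL} then delivers the claim.
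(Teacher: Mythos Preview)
Your reduction is exactly the paper's: apply Theorem~\ref{G-MGL} with $f^{-1}$ in the role of $f$, so the whole claim collapses to the $p=0$ case, and then use criterion~\eqref{eqn:inver-derivative} to turn convexity of $H(f^{-1}(u))$ into an explicit one-variable inequality in $t=f^{-1}(u)\in[0.06,0.5]$. Your derivatives of $f$ are correct, and your observation that the inequality is tight at $t=\tfrac12$ is also right and is used (implicitly) in the paper.

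The only place your proposal is thinner than the paper is the endgame. After clearing $f'(t)$ and splitting $\log\frac{(2-t)(1-t)}{t(1+t)}=\log\frac{2-t}{1+t}+\log\frac{1-t}{t}$, the paper rearranges the target to
\[
l(t):=(1-t)(2-t)\log\tfrac{2-t}{1+t}\ \leq\ (7t-5t^2)\log\tfrac{1-t}{t}=:r(t),
\]
and then, rather than first-derivative monotonicity, it computes second derivatives and shows $l''\geq 0$ and $r''\leq 0$ on $(0,\tfrac12)$. Thus $l-r$ is convex; since $l(\tfrac12)=r(\tfrac12)=0$ and a single numerical check gives $l(0.06)<r(0.06)$, convexity forces $l\leq r$ on the whole interval. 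Your ``differentiate once and show constant sign'' plan is not obviously viable in this form: $r-l$ is concave, vanishes at $\tfrac12$, is barely positive at $0.06$, but tends to $-2\log 2<0$ as $t\to 0^+$, so it is increasing just below $0.06$ and there is no reason it should already be monotone on $[0.06,\tfrac12]$. The convex/concave split is the clean device here, and it is precisely what makes checking \emph{both} endpoints the correct finishing move.
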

\begin{proof}
Let $t=f^{-1}(u)$, $t\in [0.06, 0.5]$. Then
\begin{align*}
 &   f'(u)= \frac{1}{2}\log \frac{1-\frac{u}{2}}{\frac{u}{2}} -\frac{1}{2}\log\frac{1-\frac{1-u}{2}}{\frac{1-u}{2}},\\
 &   f''(u) = - \frac{1}{u(2-u)} - \frac{1}{(1-u)(1+u)}.
\end{align*}
By Theorem \ref{G-MGL}, it suffices to prove that $H(f^{-1}(u))$ is convex in $u$. By \eqref{eqn:inver-derivative}, we obtain that
\begin{align}
    \frac{\frac{1}{2}\log\frac{(2-t)(1-t)}{t(t+1)}}{\log \frac{1-t}{t}} \leq -\frac{2t^2-2t-1}{(2-t)(1-t)}.\notag
\end{align}
By some algebra,
\begin{align*}
 &   \frac{\log \frac{2-t}{t+1}}{\log \frac{1-t}{t}}\leq \frac{7t-5t^2}{(2-t)(1-t)}.
 \end{align*}
That is
\begin{align*}
 &   (1-t)(2-t)\log\frac{2-t}{t+1} \leq (7t-5t^2)\log\frac{1-t}{t}.
\end{align*}
Let
    $$l(t) = (1-t)(2-t)\log\frac{2-t}{t+1}$$
and
    $$r(t)=(7t-5t^2)\log\frac{1-t}{t}.$$
The curves of the LHS ($l(t)$) and RHS $(r(t))$ are depicted in Fig.~\ref{fig:example1}. By some algebra, we have
\begin{equation}\notag
    \frac{d^2l(t)}{dt^2} = 2\log\frac{2-t}{1+t}+\frac{3(3-2t)}{(1+t)(2-t)} + \frac{6}{(1+t)^2}\geq 0
\end{equation}
and
\begin{equation}\notag
    \frac{d^2r(t)}{dt^2} = -10\log\frac{1-t}{t}-\frac{7-10t}{(1-t)t}-\frac{2}{(1-t)^2}\leq 0.
\end{equation}
When $t=0.06$,
\begin{equation}\notag
    l(t) =  1.5902  \leq r(t)=1.5958,
\end{equation}
which completes the proof.

\begin{figure}
  \includegraphics[width=250pt]{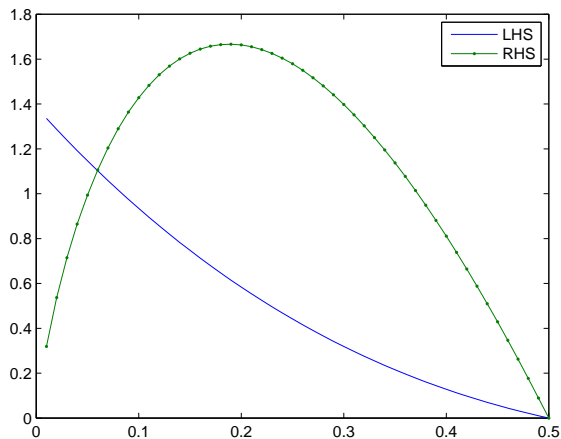}\\
  \caption{Convexity of $f(u)=H(\frac{u}{2})+H(\frac{1-u}{2})$.}\label{fig:example1}
\end{figure}

\end{proof}


\end{document}